  \definecolor{mid-green}{rgb}{0.15,0.65,0.15}
 \definecolor{dark-green}{rgb}{0.15,0.25,0.15}
 \definecolor{dark-red}{rgb}{0.7,0.15,0.15}
 \definecolor{dark-blue}{rgb}{0.15,0.15,0.9}
 \definecolor{medium-blue}{rgb}{0,0,0.5}
 \definecolor{gray}{rgb}{0.5,0.5,0.5}
 \definecolor{color-Ig}{rgb}{0.15,0.7,0.15}
 \definecolor{darkmagenta}{rgb}{0.30, 0.0, 0.30}
 \definecolor{blue}{rgb}{0.15,0.15,0.9}
\newcommand{\bjpmp}{{\sc BJIP}\xspace}
\providecommand{\customgenericname}{}
\newcommand{\newcustomtheorem}[2]{%
  \newenvironment{#1}[1]
  {%
   \renewcommand\customgenericname{#2}%
   \renewcommand\theinnercustomgeneric{##1}%
   \innercustomgeneric
  }
  {\endinnercustomgeneric}
}
\newlength{\RoundedBoxWidth}
\newsavebox{\GrayRoundedBox}
\newenvironment{GrayBox}[1]%
   {\setlength{\RoundedBoxWidth}{.93\textwidth}
    \def\boxheading{#1}
    \begin{lrbox}{\GrayRoundedBox}
       \begin{minipage}{\RoundedBoxWidth}}%
   {   \end{minipage}
    \end{lrbox}
    \begin{center}
    \begin{tikzpicture}%
       \node(Text)[draw=black!20,fill=white,rounded corners,%
             inner sep=2ex,text width=\RoundedBoxWidth]%
             {\usebox{\GrayRoundedBox}};
        \coordinate(x) at (current bounding box.north west);
        \node [draw=white,rectangle,inner sep=3pt,anchor=north west,fill=white]
        at ($(x)+(6pt,.75em)$) {\boxheading};
    \end{tikzpicture}
    \end{center}}
\newenvironment{defproblemx}[2][]{\noindent\ignorespaces%
                                \FrameSep=6pt%
                                \parindent=0pt%
                \vspace*{-1.5em}
                \ifthenelse{\isempty{#1}}{%
                  \begin{GrayBox}{\textsc{#2}}%
                }{%
                  \begin{GrayBox}{\textsc{#2} parameterized by~{#1}}%
                }
                \newcommand\Input{Input:}%
                \begin{tabular*}{\textwidth}{@{\hspace{.1em}} >{\itshape} p{1.8cm} p{0.8\textwidth} @{}}%
            }{
                \end{tabular*}%
                \end{GrayBox}%
                \ignorespacesafterend
            }
\newcommand{\defproblemaOPT}[3]{%
  \begin{defproblemx}{#1}
    {\bf Instance:}  & #2 \\
    {\bf Goal:} & #3
  \end{defproblemx}
}%
\definecolor{azul1}{RGB}{183,200,196}
  \definecolor{mid-green}{rgb}{0.15,0.65,0.15}
 \definecolor{dark-green}{rgb}{0.15,0.25,0.15}
 \definecolor{dark-red}{rgb}{0.7,0.15,0.15}
 \definecolor{dark-blue}{rgb}{0.15,0.15,0.9}
 \definecolor{medium-blue}{rgb}{0,0,0.5}
 \definecolor{gray}{rgb}{0.5,0.5,0.5}
 \definecolor{color-Ig}{rgb}{0.15,0.7,0.15}
 \definecolor{darkmagenta}{rgb}{0.30, 0.0, 0.30}
 \definecolor{blue}{rgb}{0.15,0.15,0.9}
\newcommand{\ub}[2]{{\underbrace{#1}_{#2}}}
\def\qed{{\hfill\hbox{\rlap{$\sqcap$}$\sqcup$}}}
\begin{document}

\title{Binary Jumbled Indexing: Suffix tree histogram\thanks{
Financial support from FAPERJ E-26/201.372/2022, and CNPq 406173/2021-4. 
An extended abstract of this work 
was recently presented in COCOON 2024~\cite{COCOON2024}.}
}


\author{Lu\'is Cunha         \and
        M\'ario Medina* 
}


\institute{L. Cunha \at
              Universidade Federal Fluminense \\
              \email{lfignacio@ic.uff.br}           
           \and
           M. Medina \at
              Universidade Federal Fluminense
              \email{mmedina@id.uff.br,mazen.mario@gmail.com}             \\~\\
              * Corresponding author
}

\date{Received: date / Accepted: date}

\maketitle

\begin{abstract}
Given a binary string $\omega$ over the alphabet $\{0, 1\}$, a vector $(a, b)$ is a Parikh vector if and only if a factor of $\omega$ contains exactly $a$ occurrences of $0$ and $b$ occurrences of $1$. Answering whether a vector is a Parikh vector of $\omega$ is known as the Binary Jumbled Indexing Problem (\bjpmp) 
or the Histogram Indexing Problem. 
Most solutions to this problem rely on an $O(n)$ word-space index to answer queries in constant time, encoding the Parikh set of $\omega$, i.e., all its Parikh vectors. Cunha et al. (\emph{Combinatorial Pattern Matching}, 2017) introduced an algorithm (\emph{JBM2017}), which computes the index table in $O(n+\rho^2)$ time, where $\rho$ is the number of runs of identical digits in $\omega$, leading to $O(n^2)$ 
in the worst case. 
We prove that the average number of runs $\rho$ is $n/4$, confirming the quadratic behavior
also in the average-case. 
We propose a new algorithm, \emph{SFTree}, which uses a suffix tree to remove duplicate substrings. Although \emph{SFTree} also has an average-case complexity of $\Theta(n^2)$ due to the fundamental reliance on run boundaries, it achieves practical improvements by minimizing memory access overhead through vectorization. The suffix tree further allows distinct substrings to be processed efficiently, reducing the effective cost of memory access. As a result, while both algorithms exhibit similar theoretical growth, \emph{SFTree} significantly outperforms others in practice. Our analysis highlights both the theoretical and practical benefits of the \emph{SFTree} approach, with potential extensions to other applications of suffix trees.

\keywords{Binary jumbled pattern matching \and Jumbled indexing \and Histogram indexing \and Parikh set \and Parikh vectors \and Suffix tree \and String indexing \and Prefix normal form}
\subclass{68R05 \and 68R15 \and 68Q25}
\end{abstract}

\section{Introduction} \label{intro}

    The \emph{binary jumbled indexing} problem is presented as follows: We are given a binary string $\omega$ over the alphabet $\{0, 1\}$ and asked to determine whether there exists a substring of size $r$ and $b$ $1$s. Such substring could be represented by a \emph{Parikh vector} of $\omega$, something that appears frequently in computational biology~\cite{benson2003composition,eres2004permutation}, as do jumbled patterns in the context of graphs and other structures~\cite{cicalese2013indexes,lacroix2006motif,song2021fast}. This problem has aroused much interest, as seen in a few approaches~\cite{cicalese2009searching,burcsi2012approximate,badkobeh2013binary,cunha2017faster}, and since the queries and their quantity can be arbitrary, the interest is for the problem of \emph{indexing binary strings for jumbled pattern matching}, as described below:

    \defproblemaOPT{Binary Jumbled Indexing problem (\bjpmp)}{A finite binary string $\omega$ of length $n$ over the alphabet $\{0, 1\}$.}{Construct an index table to answer queries efficiently: for integers $a, b \geq 0$, does $\omega$ have a factor with $a$ $0$s and $b$~$1$s?}

    The \bjpmp, also referred to as histogram indexing, is equivalent to determining the prefix normal form (PNF) of a binary string. The PNF, in turn, corresponds to an $O(n)$ bit space encoding of the index, providing a compact representation of all Parikh vectors. Additionally, Chan and Lewenstein~\cite{chan2015clustered} demonstrated that the \bjpmp is computationally equivalent to the (min,+) convolution problem, further highlighting its relevance in combinatorial pattern matching.

    Although working with a binary alphabet is a restriction of the general arbitrary alphabet size case, it offers the advantage of enabling $O(1)$ query time for the \bjpmp. For larger alphabets, this indexing format may result in increased query times, as the complexity of representing and accessing the Parikh set grows with the size of the alphabet, as shown by Chan and Lewenstein~\cite{chan2015clustered}. Nonetheless, Our proposed algorithm is not limited to binary strings and could be adapted to support other indexing formats, potentially extending its applicability to strings over larger alphabets, albeit with different trade-offs in performance and complexity.

    An index, in this paper, is a table constructed for a word of length $n$ over the binary alphabet that can determine the existence of substrings with a given number of $1$s. Thus, the \bjpmp asks us to preprocess a binary string such that later, given a number of $0$s and a number of $1$s, we can quickly report whether there exists a substring with those numbers of $0$s and $1$s and, optionally, return the position of at least one such substring. Direct preprocessing algorithms take quadratic time and other approaches reduced that time complexity to $O (n^2 / \log n)$~\cite{moosa2010indexing}, $O (n^2 / \log^2 n)$~\cite{moosa2012sub}, \(O(n^2 / 2^{\Omega (\sqrt{\log n / \log \log n})})\)~\cite{hermelin2014binary} and finally $O (n^{1.859})$ with randomization or $O (n^{1.864})$ without~\cite{chan2015clustered}. Other randomized approaches are considered with subquadratic construction time~\cite{kociumaka2017efficient}. Related problems were also described, as lower bounds on reporting all certificates of a query and pattern matching with mismatches~\cite{afshani2020lower}. Despite the existence of truly subquadratic algorithms, our approach offers the advantage of avoiding recursion and optimizing memory access, making it potentially more suitable for applications where these factors significantly influence performance, depending on the available resources and the programming environment.
    
    Cunha et al.~\cite{cunha2017faster} proposed an algorithm that runs in $O(n+ \rho^2)$ time and $O(n)$ words of space. Furthermore, they showed how we can either keep the same bounds and store information that lets the index return the position of one match, or keep the same time bound and use only $O(n)$ bits of space. The algorithm in ~\cite{cunha2017faster} matches one of the two algorithms proposed by Giaquinta and Grabowski~\cite{giaquinta2013new} with the parameter $k=1$, for which one runs in $O(\rho^2 \log k + n/k)$ time, produces an index that uses $O(n/k)$ extra space and answer queries in $O(\log k)$ time, and another one that runs in $O(n^2\log^2w/w)$ time, where $w$ is the size of a machine word.

\paragraph{Contributions.}

    \begin{itemize}

        \item 
        We provide the design, analysis and implementation of a new algorithm for constructing index tables from strings with both theoretical and practical implications.

        \item
        For the theoretical side, we prove that the average number of runs grows linearly with $n$, therefore when algorithms to build index tables are based on runs, it is not possible to develop a faster strategy than $\rho^2$.
        
        \item 
        Moreover, we show that our approach can be much faster than the one proposed in~\cite{cunha2017faster}. The time complexity of the former fluctuates between quadratic and linear time, while the latter stays in quadratic time.

        \item
        For the practical side, we compare our approach with the very simple and fast one proposed in~\cite{cunha2017faster}. By using a suffix tree to store information from strings, our algorithm presents advantages depending on the number of repeated substrings and the interest in using the suffix tree for other purposes.

        \item
        Using vectorization instead of iteration or recursion offers a substantial advantage, as memory access time is significantly more costly than processing time. Our algorithm achieves memory access in $O(n)$, which allows it to outperform subquadratic alternatives for all but exceptionally large inputs.
        

    \end{itemize}

\paragraph{Organization.}

    \autoref{sec:pre} provides the preliminaries for the \bjpmp, detailing the connections between Parikh sets, prefix normal forms, the \bjpmp itself, and the simple yet efficient algorithm previously introduced by Cunha et al.~\cite{cunha2017faster}. Due to its shared steps with our algorithm, \autoref{cunha} explains the workings of the JBM2017 algorithm~\cite{cunha2017faster}. \autoref{suffixtree} explains how strings can be encoded and how suffix trees are utilized to construct index tables. \autoref{algorithm} outlines our proposed indexing algorithm, proves its time complexity for both the worst and average cases, and establishes that the number of runs grows linearly with respect to the input. Finally, \autoref{results} presents practical results, comparing execution times and highlighting the advantages of our approach.

\section{Preliminaries}\label{sec:pre}

    \paragraph{Parikh set and Parikh vectors.} \label{parikh}

        Let $\Sigma = \{ 0, 1, 2, \ldots \}$ be a finite alphabet and $\omega$ be a word over $\Sigma$, i.e., a finite sequence of characters from the alphabet. Given a vector $\pi = (\pi_0, \pi_1, \pi_2, \ldots)$, $\pi$ is said to be a Parikh vector of $\omega$ if and only if there exists a substring of $\omega$ where, for each $\sigma$ in $\Sigma$, $\pi_\sigma$ is the number of occurrences of $\sigma$ in that substring. It is easy to see that for $\pi$ to be a Parikh vector, the length of such a substring must be the sum of all elements of $\pi$, and the dimension of $\pi$ must be equal to the length of $\Sigma$. For instance, given the word $011$ over the binary alphabet $\{0, 1\}$, then $\pi = (1, 1)$ is a Parikh vector of that word, since it is possible to find a substring with one $0$ and one $1$. It is important to consider that a single vector can match multiple and different substrings; that is, a Parikh vector $(2,3)$ matches each of these substrings: $00111, 01011, 01101, 01110, 10011, 10101, 10110, 11001, 11010, 11100$.

        The Parikh set of a word $\omega$ is a set of all its Parikh vectors, denoted as $\Pi(\omega)$. For example, a binary word of the form $01101$ has these and only these Parikh vectors: 
        
        $
        \Pi = \{ (1, 1), (0, 2), (1, 2), (1, 3), (2, 2), (2, 3) \}.
        $
        
        Formally, we denote:
        $
        \pi(\varepsilon) = (|\varepsilon|_{\sigma})_{\sigma \in \Sigma},
        $ 
        where $|\varepsilon|_{\sigma}$ is the number of occurrences of $\sigma$ in $\varepsilon$, substring of $\omega$. 
        $
        \Pi(\omega) = \{ \pi(\varepsilon_i) \}
        $, 
        where $\varepsilon_i$ is each substring found in $\omega$. 
        Therefore a function $f: \varepsilon_i \rightarrow \Pi $ is surjective.

    \subsection{Binary Jumbled Indexing} \label{bjpm}

        As mentioned, the \bjpmp involves preprocessing a binary string $\omega$ over $\Sigma = \{0, 1\}$ to construct an index that allows answering whether a vector $\pi$ is a Parikh vector of $\omega$ in constant time. Since no sublinear-time algorithm is known to check a single vector directly, preprocessing and indexing offer a faster solution at the expense of increased space complexity. These methods leverage the interval property to optimize the index size while ensuring efficient query responses, as detailed below.

        The \emph{interval property} of a binary string ensures that, if $x_1$ is the least occurrences of $1$s and $x_2$ is the most occurrences of $1$s for a specific length $l$, then it is possible to find a substring with length $l$ and $o$ occurrences of $1$s if and only if $x_1 \leq o \leq x_2$~\cite{burcsi2012approximate,cicalese2009searching,cunha2017faster}.
        
        As established by Badkobeh et al.~\cite{badkobeh2013binary}, we can build an index with the least and the most occurrences of $1$s --- for each length --- to answer in $O(1)$ if a certificate is in between those values.
        
        Let $max_1(l)$ be the maximum number of $1$s in any substring of length $l$ and $T_{max_1} = [t_1, t_2, t_3 \ldots]$ an array containing the $max_1(i)$ for each $t_i$. Analogously, we have an array $T_{min_1}$ for the minimum number of $1$s. Then, given the word $11011001$:
        
        $
            \begin{array}{cccc}
            11011001  & \rightarrow & T_{max_1}  = [1, 2, 2, 3, \bold{4}, 4, 4, 5], & T_{min_1} = [0, 0, 1, 2, \bold{2}, 3, 4, 5].
            \end{array}
        $
        
        Taking the fifth position of each table we know that every substring of length $5$ has $o$ occurrences of $1$s if and only if $2 \leq o \leq 4$.        
        Furthermore, we do not need to create an index with the maximum and the minimum number of $0$s, for in the binary alphabet each non $0$ is necessarily a $1$. We can derive the maximum and minimum number of $0$s from the difference between the length, and the minimum and maximum of $1$s, resp.:
        
        $
            \begin{array}{cc}
                    T_{max_0}[i] = i - T_{min_1}[i],  &  T_{min_0}[i] = i - T_{max_1}[i].
            \end{array}
        $

    \paragraph{Prefix normal forms and Prefix normal words.} \label{pnw}

        A \emph{word prefix} is an $l$-sized substring of that word with the same $l$ first characters, in the same order. In~\cite{burcsi2017prefix}, they define prefix normal word (PNW) and prefix normal form (PNF) in binary strings, which can be a $1$-prefix normal word or a $0$-prefix normal word. 
        A \emph{$1$-prefix normal word} $\omega$ is such that, for every length $1 < l < |\omega|$, an $l$-sized substring with the maximum number of $1$s is found as a prefix of $\omega$. 
        A $0$-prefix normal word is defined analogously. See~\cite{burcsi2017prefix} for more definitions.

        They demonstrated that, for every word $\omega$, it is possible to find a $1$-prefix normal word with the same index tables, referred to as the prefix normal form (PNF) of $\omega$. They also proved that every word has a unique PNF, and the set of all words sharing the same PNF is defined as a $1$-prefix equivalence class.

        For example $\omega = 1101001$ is a PNW, but its inverse (word written in reverse order) is not. For $\omega^{-1} = 1001011$, we have $T_{max_1}[3] = 2$, e.g. 101 and 011, but neither of them is a prefix of $\omega^{-1}$. Since both have the same $T_{max_1}$, 
        then $\omega$ is both the PNF of $\omega^{-1}$ and its own PNF. A PNF of a word is a PNW with the same $T_{max_1}$. 
        These definitions are important because, with a PNW, it is possible to create an index in $O(n)$ time by reading the number of $1$s for each prefix size.

    \subsection{Cunha et al.'s algorithm} \label{cunha}

        Cunha et al.~\cite{cunha2017faster} developed an algorithm for jumbled indexing of binary strings and showed that, since binary strings have the interval property, it is possible to jump between \emph{runs of $1$s}, which are defined as each sequence of consecutive $1$s. They save a single witness for every $l$-sized substring with a maximum number of $1$s. For example, $\omega = 1100101$ would be indexed for each substring starting at the beginning of a run of $1$s and finishing at the end of a run of $1$s:

        \begin{center}
        $
        \begin{array}{llcc}
             \bold{11}00101 & 11 & \mapsto & T_{max_1}[2] = 2 \\
             \bold{11001}01 & 11001 & \mapsto & T_{max_1}[5] = 3 \\
             \bold{1100101} & 1100101 & \mapsto & T_{max_1}[7] = 4 \\
             1100\bold{1}01 & 1 & \mapsto & T_{max_1}[1] = 1 \\
             1100\bold{101} & 101 & \mapsto & T_{max_1}[3] = 2 \\
             110010\bold{1} & 1 & \mapsto & T_{max_1}[1] = 1 \\
        \end{array}
        $
        \end{center}
        
        To fill out the rest of the index table, they use adjacent values and the interval property. Hence, if $T = [t_1, t_2, t_3 ...]$ is the index table where $t_i$ is the maximum number of $1$s in any substring of length $i$, then $t_i \leq t_{i+1} \leq t_i + 1$, and we can pass over $T$ right-to-left and left-to-right, assigning the maximum value between the current value and the least possible one, completing the index table. They proved that:         
        
        {
        $
        \begin{array}{lcccllllcccl}
             T[i+1]-1 & \leq & T[i] & \leq & T[i+1] &, & T[i-1] & \leq & T[i] & \leq & T[i-1]+1. 
        \end{array}
        $
        
        }

        The process of filling out the index table by using adjacent values and assigning the maximum possible value, we define as "\emph{windowizing}" the index table. This operation is implemented as a separate function in the code presentation of our algorithm.
        
        When trying to reduce the space used for the index table, they showed that we can transform an $O(n)$ word into an $O(n)$ bit index table by assigning $1$ if $t_{i+1} > t_i$, or $0$ if $t_{i+1} = t_i$. For example, considering $\omega = 010101110101$:
        
        $
        \begin{array}{cc}
             T_{word} = [1,2,3,3,4,4,5,5,6,6,7,7], & T_{bit} = [1,1,1,0,1,0,1,0,1,0,1,0].
        \end{array}
        $
        
        Here, we have a synergy between the PNF definition and the bit-encoded index table, for the sequence in the bit-encoded table is the $1$-prefix normal form of the word indexed. It is clear that trying to solve the \bjpmp using index tables is equivalent to finding the prefix normal forms of a given word.
        
        $
        \begin{array}{ccc}
            PNF_1(\omega) = 111010101010 & \rightarrow & T_{bit} = [1,1,1,0,1,0,1,0,1,0,1,0].
        \end{array}
        $

        
        The time complexity of the algorithm developed in~\cite{cunha2017faster} is $O(n+\rho^2)$, 
        where $\rho$ is the number of runs, which is $O(n^2)$ when $\rho$ approaches $n$. The worst case for this algorithm is given by the string $1010101\cdots$. The algorithm runs from the start of each run of $1$s to the end of the next run; therefore, it will index each occurrence of $1$, $101$, $10101$, $1010101 \cdots$. Since we have $\rho = \frac{n}{2}$, then the time complexity can be written in terms of $n$: $O(n+{(\frac{n}{2})}^2) = O(n^2)$.

        Since repeated occurrences do not change our index table, there is no upside to indexing each pattern more than once; therefore, we can get rid of repeated occurrences. In the next section, we present an algorithm that uses a suffix tree to build string patterns without repetition to reduce the time when reading each substring enclosing runs of $1$s.

\section{Suffix tree and special pattern encode} \label{suffixtree}

    A suffix tree is a specialized data structure utilized to store a list of strings~\cite{ukkonen1995line}. By design, the suffix tree incorporates each substring pattern only once, making this attribute particularly advantageous for our needs. 

    Suffix links, commonly utilized in suffix trees to enable efficient traversal and pattern matching by connecting nodes representing suffixes of the same prefix, are typically beneficial for skipping redundant computations. However, since our algorithm explicitly processes each pattern during the table construction, the suffix link structure does not contribute to efficiency in this context and is therefore omitted.

    \paragraph{Special pattern encode.} \label{encode}

        In the context of \bjpmp, our focus lies solely on the frequency of occurrences for each character of the alphabet. Hence, each consecutive run of a character can be represented by the length of its repetition. For instance, consider the string $110111001$; its special pattern encoding reflects the count of repetitions for each digit: 
        
        $
            \begin{array}{cc}
            \ub{11}{2}\ub{0}{1}\ub{111}{3}\ub{00}{2}\ub{1}{1} \ , &  110111001 \mapsto 2\ 1\ 3\ 2\ 1.\\
            \end{array}
        $

        The implementation of the proposed \emph{SFTree} algorithm has been modified to accurately differentiate single-digit values from concatenated ones, ensuring precise parsing. Furthermore, the algorithm has been adapted to record the starting digit of each suffix, enabling the index to efficiently support queries about digit counts.

    \paragraph{Building the suffix tree.} \label{buildingtree}

        The Ukkonen's algorithm for building suffix trees is widely known for its time complexity, which is $O(n)$~\cite{ukkonen1995line}. Its essence relies on saving the initial position of each branch created instead of the entire substring. It is important to highlight that some structures of the Ukkonen's algorithm are not necessary for us, such as the suffix links.

        For example, considering the string $1101100110111011000110111$, let us build its suffix tree. First, we count each digit repetition to construct the special pattern encoding:
        $
        \begin{array}{c}
             1101100110111011000110111 \mapsto 2\ 1\ 2\ 2\ 2\ 1\ 3\ 1\ 2\ 3\ 2\ 1\ 3 
        \end{array}
        $

        Now, we use the Ukkonen's algorithm to build its suffix tree, without using suffix links. See \autoref{fig:partial-suffix-tree} for an example.

        It is important to note that Ukkonen's algorithm inherently saves the positions of the certificates during the construction of the suffix tree. These positions can be retrieved and optionally indexed alongside the main index table. This feature allows the algorithm to not only confirm the existence of patterns but also efficiently return their locations when required.

        \begin{figure}[!h]
            \centering
            \includegraphics[width=6.5cm]{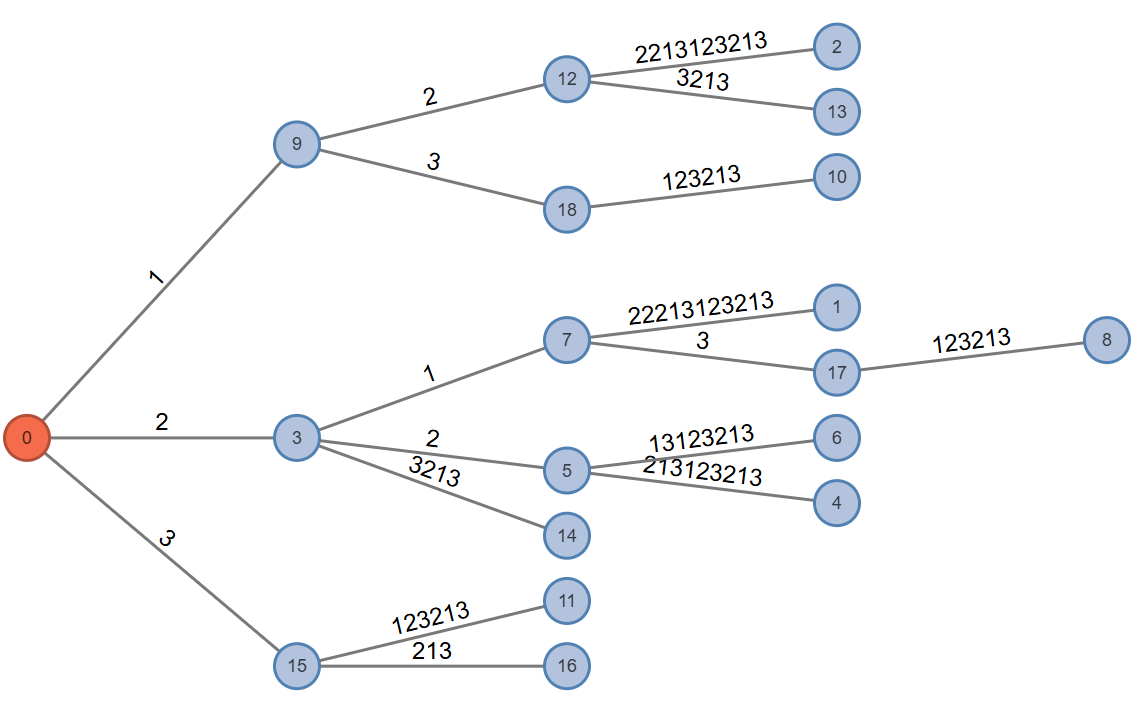}
            \caption{Build in: \url{https://brenden.github.io/ukkonen-animation/}. The node labels correspond to the steps of Ukkonen's algorithm. 
            \label{fig:partial-suffix-tree}}
        \end{figure}

\section{Binary Jumbled Indexing: Algorithm} \label{algorithm}

    Now, we describe our strategy for building the index of a given binary string. Essentially, we begin with a binary string. We then utilize $O(n)$ time to construct its special pattern encoding and an additional $O(n)$ time to create its suffix tree using Ukkonen's algorithm. Afterward, we extract each possible factor from the suffix tree. This involves performing a pre-order traversal of the suffix tree and indexing from the parent node up to the current node, described in \autoref{alg:suffix_tree}.

    Let us illustrate with some steps from the suffix tree in \autoref{fig:partial-suffix-tree}. Select node~$3$. Its substring is '2', which has total length equals 2. Indexing:

    {\centering
    $
        \begin{array}{ll}
            T_{max_1}[2] = & max(T_{max_1}[2], 2).
        \end{array}
    $
    
    }


\begin{algorithm}[!h]
\begin{footnotesize}
    \SetKwInOut{Input}{Input}
    \SetKwInOut{Output}{Output}
    \SetKwProg{Def}{def}{:}{}
    \SetKw{KwBy}{by}

    \SetAlgoLined
    \Input{Binary string $\omega$ with length $l$}
    \Output{$T_{max_1}$ and $T_{max_0}$ index tables}

    \BlankLine
    \Def{index($start, end, from, table$)}{
        \tcp{add elements to the index table for each substring}
        \If{$(end - start) \ \% \ 2 \ != 0$}{
            \For{$i\leftarrow from$ \KwTo $(end-start)$ \KwBy $2$}{
                $window = summed[start+i][0]-summed[start-1][0]$\;
                $count = summed[start+i][1]-summed[start-1][1]$\;
                \If{$\omega[start]==0$}{
                    $count -=  window$\;
                }
                $table[window] = max(count, table[window])$\;
            }
        }
    }

    \BlankLine
    \Def{windownize($table$)}{
        \tcp{fill out the table}
        \For{$i\leftarrow l$ \KwTo $1$}{
            $table[i] = max(table[i], table[i+1]-1)$\;
        }
        \For{$i\leftarrow 1$ \KwTo $l$}{
            $table[i] = max(table[i], table[i-1])$\;
        }
    }

    \Begin{
        $counted = []$\;
        \tcp{counted is a list of each character run size in $\omega$}
        $runs = 1$\;
        \For{$i\leftarrow 2$ \KwTo $l$}{
            \If{$\omega[i] == \omega[i-1]$}{
                $runs+=1$\;
            }
            \Else{
                $counted.add(runs)$\;
                $runs=1$\;
            }
        }
        $counted.add(runs)$\;
        $summed = []$\;
        \tcp{summed is a list of how many $1$s in each prefix from $\omega$}
        $window = 0$\;
        $tot\_1 = 0$\;
        \For{$i, v \in counted$ \tcp*[h]{the pair i $\leftarrow$ key, v $\leftarrow$ value}}{
            $window += v$\;
            \If{$(i+\omega[0]) \ \% \ 2 \ != 0$}{
                $tot\_1 += v$\;
            }
            $summed.add([window, tot\_1])$\;
        }
        \BlankLine

        \emph{builds and traverse the suffix tree}\;
        $tree = Suffix\_Tree(counted)$; \tcp{Ukkonen's Algorithm $O(n)$}
        \For{$factor \in tree$ \tcp*[h]{Pre-order traversal}}{
            \If{$factor[0] + \omega[0] \ \% \ 2 == 0$ \tcp*[h]{substring starts in $0$}}{
                $index(factor.start, factor.end, factor.parent\_node.end, T_{max_0})$\;
            }
            \Else(// substring starts in $1$){
                $index(factor.start, factor.end, factor.parent\_node.end, T_{max_1})$\;
            }
        }
        $windonize(T_{max_1})$\;
        $windonize(T_{max_0})$\;
        \Return{$T_{max_1}, T_{max_0}$}
    }
\caption{\textsc{BJI by Suffix Tree}\label{alg:suffix_tree}}
\end{footnotesize}
\end{algorithm}



    We traverse the tree in pre-order, meaning that parents are already indexed when we reach their children. If the node substring length is even, then it starts and ends with the same digit. We only need to index substrings that enclose runs of the same digit, which occurs if the substring length is odd or if the difference between the parent's length and the node's length is greater than~$1$. In other words, if the difference between the parent's length and the node's length is $1$, then this node is adding only a run of $0$s or a run of $1$s. If the length of the node is even, then it adds a run of a different digit than the one we are currently indexing, and we can skip this node. Next, consider node~$7$. Its substring is '2-1', and the parent's substring is '2'. Since the substring length is even and the difference between the parent's length and the node's length is not greater than $1$, we do not need to index it. Moving on to node $1$. Its substring is '2-1-2-2-2-1-3-1-2-3-2-1-3', and the parent's substring is '2-1', so we index it starting at the parent.

{\centering
    $
        \begin{array}{rl}
             T_{max_1}[2+1+2] = & max(T_{max_1}[5], 2+2) \\
             T_{max_1}[2+1+2+2+2] = & max(T_{max_1}[9], 6) \\
             T_{max_1}[2+1+2+2+2+1+3] = & max(T_{max_1}[13], 9) \\
             T_{max_1}[2+1+2+2+2+1+3+1+2] = & max(T_{max_1}[16], 11) \\
             \cdots
        \end{array}
    $

}

    We have reached a leaf. The pre-order traversal will return to the previous parent and then move to the next child node, which is $17$. Its substring is '2-1-3' and its parent node is '2-1'. Notice that, although the difference between the parent's length and the node's length is not greater than $1$, the node's length is odd. Therefore, we index it. 
    $
        \begin{array}{ll}
            T_{max_1}[6] = & max(T_{max_1}[6], 5).
        \end{array}
    $

    We proceed to index each node of the tree until the pre-order traversal ends. If the node substring starts with the digit $0$, we index it to the table $T_{max_0}$ instead.

    When assigning values to the index table, it is important to highlight that we always check if the current indexed value is not already greater than the new value: $T[\alpha] = \max(T[\alpha], \beta)$.

    Traversing a suffix tree means obtaining each unique suffix of the string. We use it to index each prefix of each of those suffixes from the start to the end of runs of the same digit, thus achieving the same result as the Cunha et al.'s algorithm, but avoiding duplicated substrings.

    \paragraph{Time complexity analysis.} \label{time}
    
    Recall that the worst-case scenario for Cunha's algorithm is a pattern of interspersed $1$s and $0$s: $101010\cdots$. But for our algorithm, this becomes the best case, as it is full of repetition. Notice that there exists only one substring for each $l$-sized window starting and ending in a run of $1$s ($1: 1; \ 3: 101; \ 5: 10101 \cdots$), thus achieving linear time using the suffix tree.

    One might initially assume that the worst-case scenario for our algorithm is when there are no repetitions. Let $\omega$ be a binary word of length $n$ over the alphabet $\{0, 1\}$ with no repetitions, meaning it has distinct run lengths for each digit. Since the complexity analysis is based on the number of runs ($\rho$) rather than the specific results of the index table, the order of these runs does not affect the complexity. For clarity, consider two examples: $101100111000$ and $111100011000$. While the former does not contain a substring of size $5$ with $4$ $1$s, the latter does. However, this difference is irrelevant to the complexity, as it depends solely on $\rho$. For simplicity, the runs can be sorted, resulting in a word such as $\omega = 10110011100011110000\cdots$. The sorted version maximizes the number of runs, making it the worst-case scenario for the algorithm when there are no repetitions.

    Now we can establish $n$ in terms of $\rho$:

    {\centering
    $
        \begin{array}{ccc}
        n =  \sum_{i=1}^{\rho}2i  =  2\frac{\rho(\rho+1)}{2} = \rho^2+\rho, & \mapsto & \rho  \simeq  \sqrt{n}.  \\
        \end{array}
    $

    }

    Since the overall time complexity for our algorithm is $O(n+\rho^2)$ and $\rho \simeq \sqrt{n}$, then when no repetitions occur, the algorithm achieves $O(n+n) = O(n)$ time.

    The worst case for our algorithm shifts to the worst space complexity of a suffix tree, as it will produce the maximum number of different factors depending on $n$. The worst space complexity for a suffix tree is already known to be the Fibonacci word, or the analogous rabbit sequence. This is a sequence of strings obtained by considering $s_0 = 0$, $s_1 = 01$, $s_n = s_{n-1}\cdot s_{n-2}$, where $\cdot$ denotes concatenation of two strings.

    It is counterintuitive why the Fibonacci word or the rabbit sequence represents the worst-case scenario for space complexity in the suffix tree and time complexity for our algorithm. To illustrate this, consider the following example: compare two binary strings with the same size: $S_1 = 10110101$ and $S_2 = 10110111$. $S_2$ has no repetitions of factors enclosing runs of $1$s and $S_1$ has two repetitions; however, $S_2$ has $\rho = 3$ while $S_1$ has $\rho = 4$. Calculating unique factors enclosing runs of $1$s, we have:

    {\centering
    $
        \begin{array}{ccccccccc}
             S_1 \mapsto & \frac{4(4+1)}{2} & = 10 - 2 & = 8, & &
             S_2 \mapsto & \frac{3(3+1)}{2} & = 6 - 0 & = 6.
        \end{array}
    $

    }

    Even though $S_2$ has fewer repetitions, it comes at the cost of word space that could otherwise be used to increase the number of runs. Therefore, if we aim to maximize the number of unique factors, it is achieved through an equilibrium between repetitions and run size, which results in the Fibonacci word or the rabbit sequence.

    One may notice that the Fibonacci word has many repeated runs of $1$, since each run has length $1$. This can be easily explained by the formula:

    {\centering
    $
        \alpha^2+\beta^2 < (\alpha+\beta)^2 \ | \ \forall \alpha, \beta > 0.
    $

    }

    Let $\alpha$ and $\beta$ be the variance of runs of each digit. The uniqueness of substrings relies on the variance of the runs. This formula shows that varying runs of two digits is less effective than allowing only one of the digits to vary to create unique factors.
    
    Although the Fibonacci word and rabbit sequence are the worst-case scenarios for our algorithm, they still exhibit many repetitions, even more than the average binary string. Therefore, we demonstrate in a comparison table (\autoref{tab:comparacao_fibo}) that they are favorable for our algorithm, as the other one is bounded by~$\rho^2$.

    In the context of average-case analysis for the proposed algorithm, it is crucial to understand the behavior of runs in a binary string. The average number of runs in a binary string directly influences the complexity of the algorithms. Based on this, we present the following theorem, which provides the average number of runs in a binary string of length $n$. This result will serve as the foundation for the average-case complexity analysis of the algorithms.

        \begin{theorem}\label{thm:averageruns}
            The average number of runs $\rho$ in a string with size $n$ is $\frac{n}{4}$.
        \end{theorem}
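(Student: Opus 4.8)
The plan is to treat $\omega$ as a uniformly random binary string of length $n$ — i.e.\ each of the $n$ bits is an independent fair coin flip — and to compute the expected value of $\rho$ by linearity of expectation over a well-chosen family of indicator variables. The first thing I would pin down is \emph{which} runs $\rho$ counts: the worst-case figure $\rho = n/2$ recorded earlier for $1010\cdots$ is exactly the number of maximal blocks of a single fixed digit (the runs of $1$s), not the total number of runs of both digits. So I would count runs of $1$s; by the $0\leftrightarrow 1$ symmetry of the uniform distribution the expected number of runs of $0$s is identical, which will give a useful cross-check.

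Next I would introduce indicators $X_i$, for $1 \le i \le n$, where $X_i$ marks the event that a new run of $1$s begins at position $i$: set $X_1 = \mathbf{1}[\omega_1 = 1]$ and, for $i \ge 2$, $X_i = \mathbf{1}[\omega_{i-1} = 0 \text{ and } \omega_i = 1]$. Since every run of $1$s has a unique starting index, $\rho = \sum_{i=1}^{n} X_i$, and therefore $\mathbb{E}[\rho] = \sum_{i=1}^{n} \mathbb{E}[X_i]$. The per-position probabilities follow immediately from bit-independence: $\mathbb{E}[X_1] = \tfrac12$, while for each $i \ge 2$ a fixed adjacent pair equals $01$ with probability $\tfrac14$, so $\mathbb{E}[X_i] = \tfrac14$. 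Combining these yields $\mathbb{E}[\rho] = \tfrac12 + (n-1)\tfrac14 = \tfrac{n+1}{4}$, whose leading term is $n/4$, as claimed.

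As an independent sanity check I would also run the symmetry argument: the expected number of ``change'' positions (indices $i$ with $\omega_{i-1} \neq \omega_i$) is $(n-1)/2$, so the expected total number of runs of both digits is $(n-1)/2 + 1 = (n+1)/2$; halving this by the digit symmetry again gives $(n+1)/4$. Having the two derivations agree guards against an off-by-one in the boundary handling, which is the easiest place to slip.

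This statement is elementary, so the ``hard part'' is really one of precision rather than difficulty. The one genuine subtlety is the boundary contribution at position $1$, which accounts for the gap between the exact expectation $(n+1)/4$ and the clean value $n/4$ in the theorem; I would state explicitly that $n/4$ is the leading-order value, the additive constant $1/4$ being negligible as $n \to \infty$ and, in particular, irrelevant to the $\Theta(n^2)$ average-case complexity this result feeds into. The other point I would make explicit is the probability model itself: that ``average'' is taken uniformly over the $2^n$ binary strings of length $n$, under which the indicator computation above is exact rather than approximate.
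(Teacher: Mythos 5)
Your proof is correct, and it takes a genuinely different route from the paper's. You decompose $\rho$ (the number of runs of $1$s, which you correctly identify as the quantity the paper means, given the $\rho = n/2$ worst case for $1010\cdots$) into indicator variables for run-start positions and apply linearity of expectation over the uniform distribution on $\{0,1\}^n$, obtaining the exact value $\mathbb{E}[\rho] = (n+1)/4$ and noting that $n/4$ is the leading term. The paper instead takes a counting approach: it reserves one symbol per run of $1$s and one separating $0$ between consecutive runs, distributes the remaining $\kappa = n - 2\rho + 1$ symbols among the groups via a stars-and-bars sum of products of binomial coefficients, and then asserts that ``the formula reflects a binomial distribution between $1$ and $n/2$'' whose mean sits at the midpoint, concluding $\frac{1}{2}\cdot\frac{n}{2} = \frac{n}{4}$. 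That last step is a heuristic rather than a derivation --- a binomial distribution has its mean at the midpoint of its range only when the success probability is $1/2$, and the paper never establishes that the run count is binomially distributed --- so your indicator argument is both more elementary and more rigorous. What the paper's approach buys, in principle, is the full distribution of $\rho$ (the counting formula enumerates all strings with a given number of runs), whereas yours yields only the expectation; but for the stated theorem the expectation is all that is needed, and your version also makes the $+\frac{1}{4}$ boundary correction explicit, which the paper silently absorbs.
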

        \begin{proof}
            We begin by noting that each run of $1$s must contain at least one $1$, and each run must be separated by at least one $0$. Thus, we reserve $\rho$ digits for the $1$s (one for each run) and $\rho-1$ digits for the $0$s, which are placed between the runs. 

            This leaves us with $\kappa = n - \rho - (\rho-1) = n - 2\rho + 1$ elements remaining to distribute. These elements can be assigned either to the $\rho$ groups of $1$s or to the $\rho-1 + 2 = \rho+1$ groups of $0$s, as zeros can also be placed at the edges of the string.
            
            The number of valid distributions is therefore given by the following formula:
            \[
            \sum_{i=0}^{\kappa} \binom{i + \rho - 1}{\rho - 1} \cdot \binom{\kappa - i + \rho + 1 - 1}{\rho + 1 - 1}.
            \]
            
            Here, the first binomial coefficient counts the ways to distribute $i$ extra elements among the $\rho$ groups of $1$s, while the second binomial coefficient counts the ways to distribute the remaining $\kappa - i$ elements among the $\rho + 1$ groups of $0$s.
            
            The maximum number of runs for any binary string of length $n$ is $\frac{n}{2}$, since each run requires at least one $1$ and one $0$, and there are $n$ total elements. 
            
            Since the formula reflects a binomial distribution between $1$ and $\frac{n}{2}$, and the mean of a binomial distribution occurs at its midpoint, the expected number of runs $\rho$ for a string of length $n$ is:
            \[
            \frac{1}{2} \cdot \frac{n}{2} = \frac{n}{4}.
            \]~\qed
        \end{proof}
    
        We prove in \autoref{thm:averageruns} that the average number of runs for a given binary string is $\frac{n}{4}$, therefore, as a corollary, we can also prove the average-case time complexity for the JBM2017 algorithm:

        \[
            \rho^2 = \left( \frac{n}{4} \right)^2 = \frac{n^2}{16} = \Theta(n^2)
        \]

        Our proposed algorithm improves efficiency by avoiding redundant counting of repeated substrings. Consequently, as established in \autoref{thm:averageruns}, the average-case time complexity can be expressed as $\Theta(n^2) - \Theta(r)$, where $r$ denotes the average number of repeated substrings. To determine $r$, we utilize an alternative data structure: the suffix trie.

        The suffix trie retains the same fundamental structure as a typical suffix tree, but with an additional node inserted between each digit, making every edge unary. In the proposed algorithm, each digit in the tree is processed individually. Consequently, the average time complexity of the algorithm aligns more closely with the average space complexity of the suffix trie, which is $O(n^2)$. This indicates that the number of repetitions grows relatively slowly with respect to the string length $n$.


    \paragraph{Proving $\rho^2$ as lower bound for indexing table.} \label{p2}

    We know that:
        $T[i] = T[i-1]$ or $T[i] = T[i-1]+1$.
        If $T[i] = T[i-1]$, then there is a max($i$-factor) that starts or ends in $1$. It could exist or not exist an $i$-factor starting and ending in $1$.
        If $T[i] = T[i-1]+1$, then any max($i$-factor) must start with $1$ and end with $1$. Otherwise, we could remove any edge $0$ and index $T[i-1] = T[i]$, which would lead to a contradiction.

    \begin{lemma}\label{lm:runs}
    To build an index table based on comparing runs of $1$s, the optimal strategy is to index only factors of the word that begin and end in $1$ with no subsequent $1$s.
    \end{lemma}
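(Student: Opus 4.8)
The plan is to prove the two halves of the claim separately: first, that indexing every \emph{run-bounded} factor (one that starts at the beginning of a run of $1$s and ends at the end of a run of $1$s, so that it has no adjacent $1$ on either side) and then windowizing recovers the true table; and second, that every other factor is redundant, so that this family is exactly what an optimal run-based strategy processes. Throughout I would write $T^\star[l]$ for the true maximum number of $1$s over all length-$l$ factors, and $T[l]$ for the entry actually computed. From the interval property recalled earlier we have $T^\star[i-1]\le T^\star[i]\le T^\star[i-1]+1$. Since these are precisely the inequalities enforced by the two windowizing sweeps $T[i]=\max(T[i],T[i-1])$ and $T[i]=\max(T[i],T[i+1]-1)$, windowizing can never push an entry above its true value; hence it suffices to establish the lower bound $T[l]\ge T^\star[l]$ once the run-bounded factors have been inserted.

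First I would dispose of the redundant factors, reusing the observations stated just before the lemma. If a factor of length $l$ with $c$ ones begins or ends with a $0$, deleting that $0$ yields a length-$(l-1)$ factor with the same $c$ ones, so the left-to-right sweep forces $T[l]\ge T[l-1]\ge c$ and the factor contributes nothing new. If instead a factor begins and ends with $1$ but is adjacent to a further $1$ (say on the right), appending that $1$ gives a length-$(l+1)$ factor with $c+1$ ones, and the right-to-left sweep forces $T[l]\ge T[l+1]-1\ge c$; the same holds symmetrically on the left. Iterating this second reduction on both ends extends any $1$-delimited factor out to its surrounding run boundaries, so only run-bounded factors ever need to be indexed explicitly.

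For sufficiency I would start from a true-maximal length-$l$ factor $g$ with $c=T^\star[l]$ ones, trim its $t$ leading and trailing $0$s (leaving the count unchanged) and then extend across the partial end-runs of $1$s by $k$ additional $1$s, until both endpoints reach run boundaries. This produces a run-bounded factor $\hat g$ of length $\hat\ell=l-t+k$ with $\hat c=c+k$ ones, which the algorithm indexes, setting $T[\hat\ell]\ge c+k$. Propagating back to length $l$ is a short case split. If $\hat\ell\ge l$, I descend $\hat\ell-l=k-t$ steps using $T[i]\ge T[i+1]-1$, each losing at most one, obtaining $T[l]\ge(c+k)-(k-t)=c+t\ge c$. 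If $\hat\ell<l$, I ascend $l-\hat\ell=t-k$ steps using $T[i]\ge T[i-1]$, which never decreases the value, so $T[l]\ge c+k\ge c$. Either way $T[l]\ge c=T^\star[l]$, which together with the upper bound yields equality at every length.

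Finally, for the optimality claim I would observe that the run-bounded factors are in bijection with ordered pairs of runs of $1$s (a start run and an end run), of which there are $\binom{\rho+1}{2}=\Theta(\rho^2)$, and exhibit words — such as the sorted no-repetition word $10110011100\cdots$ used earlier — in which distinct pairs produce distinct, non-recoverable contributions, so no run-bounded factor can be dropped. The step I expect to be the main obstacle is sufficiency: making the trim-then-extend surgery and the subsequent windowizing propagation fully rigorous, since one must verify that the displacement $\hat\ell-l$ always carries the sign that the appropriate sweep can absorb, which is exactly what the two-case count above is designed to guarantee.
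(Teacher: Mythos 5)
Your proof is correct, and it is a genuinely more complete route than the one the paper takes. The paper argues by cases on the relation between $T[i]$ and $T[i-1]$: when $T[i]=T[i-1]+1$ the maximal $i$-factor must start and end in $1$ (by the observations stated just before the lemma) and is indexed; when $T[i]=T[i-1]$ the entry is filled by the neighbouring value, and the proof is phrased as an operation count (skipping all other $i$-factors ``saves $n-i$ operations''). What the paper never does is verify that windowizing actually \emph{recovers} $T^\star[l]$ for lengths whose maximal factors are not run-bounded, nor does it address the ``no subsequent $1$s'' clause at all --- its argument only ever discusses factors that begin and end in $1$, not factors extended to full run boundaries. Your trim-then-extend surgery with the case split on the sign of $\hat\ell-l$ supplies exactly this missing sufficiency direction (and the split correctly matches each direction of propagation to the corresponding windowizing sweep, which is processed in the right order for the values to chain), while your append-the-adjacent-$1$ reduction is what actually justifies restricting to run-bounded factors rather than merely $1$-delimited ones. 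The one soft spot is your final optimality paragraph: the claim that no run-bounded factor can ever be dropped is not proven (and is false for words with repeated substrings --- that is the premise of the paper's SFTree algorithm); but that claim is not needed for this lemma, whose content is only that factors \emph{outside} this family are redundant, and the quantitative lower bound is deferred to \autoref{thm:lowerlimit} in the paper. In short, your argument proves strictly more carefully what the paper asserts, at the cost of being longer; the paper's version buys brevity by leaning on the informal observations preceding the lemma.
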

    \begin{proof}
        Let $\omega$ be a binary word of length $n$. For each $i$, $0 < i \leq n$, there are $n-i+1$ $i$-length factors of $\omega$. We need to build an index table of size $n$, where for each $i$, $0 < i \leq n$, we only need to index an $i$-factor with the maximum number of $1$s.
        If $T[i] = T[i-1]+1$, then the max($i$-factor) starts and ends in $1$, as established, and we would index it. But if $T[i] = T[i-1]$, with no $i$-factors starting and ending in $1$, at the end of the algorithm $T[i]$ would be empty, and we could set $T[i] = T[i-1]$, ignoring all $i$-factors. This will save us $n-i+1-1 = n-i$ operations. Since $i \leq n$, it will never be more expensive to use neighbor indexed values.
        Now, if $T[i] = T[i+1]-1$, analogously it is faster to set $T[i] = T[i+1]-1$ than to search for max($i$-factors).~\qed
    \end{proof}

    We have shown in \autoref{lm:runs} that any algorithm that constructs an index table is faster when it only utilizes factors starting and ending in $1$.

    \begin{theorem}\label{thm:lowerlimit}
    $\Omega(n+\rho^2)$ is a lower bound for building an index table based on comparing runs of $1$s.
    \end{theorem}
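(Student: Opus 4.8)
The plan is to establish the two additive terms separately. The $\Omega(n)$ term is immediate: any correct procedure must emit the index table, which has $n$ entries, so producing the output---equivalently, reading the length-$n$ input---already costs $\Omega(n)$ operations. The substance of the argument therefore lies in justifying the $\Omega(\rho^2)$ term for any algorithm whose basic operation is comparing runs of $1$s.

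For that term I would start from \autoref{lm:runs}, which certifies that the optimal run-comparing strategy only ever indexes factors that begin and end in a run of $1$s. Such a factor is specified precisely by choosing a start run and an end run among the $\rho$ runs of $1$s with the start index not exceeding the end index, so the number of candidate factors is
\[
\sum_{k=1}^{\rho} k = \frac{\rho(\rho+1)}{2} = \Theta(\rho^2).
\]
To convert this count into a lower bound I would fix the sorted-run worst case $\omega = 10110011100011110000\cdots$ already used in the complexity analysis, where the $k$-th run of $1$s has length $k$. The factor spanning runs $i$ through $j$ then carries the run-length signature $[i,i,i+1,i+1,\ldots,j-1,j-1,j]$, which is determined uniquely by the pair $(i,j)$; hence all $\frac{\rho(\rho+1)}{2}$ candidates are pairwise distinct as substrings, so no form of deduplication can merge them. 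Since each table entry $T[\ell]$ is by definition the maximum $1$-count over the candidate factors of length $\ell$, and computing a maximum over a set of values in the comparison model requires inspecting every member of that set, summing over all lengths forces $\Theta(\rho^2)$ inspections in total. Combined with the $\Omega(n)$ term, this yields $\Omega(n+\rho^2)$.

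The step I expect to be the main obstacle is showing that the windowizing shortcut cannot circumvent these inspections. Because several candidates may share a common length---for instance, in the sorted string the lone run of length $4$ and the pair of runs of lengths $1$ and $2$ both yield length-$4$ factors with different $1$-counts---one cannot simply claim that every candidate is the sole witness of its own entry. Instead I would run an adversary that keeps the run lengths unfixed until the corresponding factor is inspected: whenever an algorithm attempts to finalize an entry while some competing candidate of that length is still untouched, the adversary assigns the untouched factor a $1$-count that makes it the true maximizer, forcing an incorrect output. Making this adversary strategy precise---so that it defeats even algorithms that freely interleave factor inspections with right-to-left and left-to-right windowizing passes---is the delicate core of the proof, and it is also what pins down the sense in which the $\rho^2$ dependence is inherent to the run-comparison paradigm rather than an artifact of a particular implementation.
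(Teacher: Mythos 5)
There is a genuine gap, and you have in fact located it yourself: the adversary argument that you defer to the end is not an auxiliary detail but the entire content of the $\Omega(\rho^2)$ term, and your proposal does not carry it out. Counting $\rho(\rho+1)/2$ candidate factors and observing that they are pairwise distinct substrings only rules out deduplication; it does not show that any of them must be \emph{inspected}, precisely because windowizing lets the algorithm derive $T[\ell]$ from $T[\ell\pm1]$ without ever touching a candidate of length $\ell$. The paper closes this hole not by an explicit adversary but by choosing the hard instance differently: it posits a string $AxByCzD\cdots$ in which, for each $k$, all $\rho-k+1$ factors spanning exactly $k$ consecutive runs of $1$s have the \emph{same} decoded length $s_k$. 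Then the single entry $T[s_k]$ is literally a maximum over an unordered list of $\rho-k+1$ values, which in the comparison model costs $\Omega(\rho-k+1)$; summing over $k$ gives $\Omega(\sum_{i=1}^{\rho} i)=\Omega(\rho^2)$, and windowizing cannot help because each of these maxima sits in one entry rather than being spread across many lengths. (The paper's proof is itself terse --- it does not verify that such an instance exists for all $k$ simultaneously --- but it at least reduces the claim to the standard fact that an unordered maximum requires examining every element, which your sketch never reaches.)

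Your choice of the sorted-run instance $10110011100011110000\cdots$ also works against you in two ways. First, on that instance $n=\rho^2+\rho$, so $\rho=\Theta(\sqrt{n})$ and $\rho^2=\Theta(n)$: the bound you would extract collapses to the trivial $\Omega(n)$ and cannot witness the $\rho^2$ term for strings with $\rho\gg\sqrt{n}$, which is exactly the regime where the theorem has content. A lower bound parameterized by both $n$ and $\rho$ needs a witness family for each admissible pair, not a single family along the curve $\rho\approx\sqrt{n}$. Second, in the sorted instance most lengths have very few candidates, so the ``maximum over a set'' step yields almost nothing per entry, which is why you are forced to fall back on the unexecuted adversary. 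If you want to complete the proof along your lines, the fix is to adopt the paper's instance design (equal-length $k$-span factors concentrating many candidates on one table entry) and then your comparison-model argument goes through without any adversary at all.
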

    \begin{proof}
        Let $AxByCzD\cdots$ be a binary encoded string, where $A,B,C\cdots$ are the sizes of each run of $1$s and $x,y,z\cdots$ are the sizes of each run of $0$s. Let $\rho$ be the number of runs of $1$ in the word. \autoref{lm:runs} shows that we only need to index $max(A, B, C, D \cdots)$, but since we do not have an ordered list, then we have to spend $\Omega(\rho)$ to find it. Now, depending on the values of each run, we could have $|A|+|x|+|B| = |B|+|y|+|C|+|z|+|D|$, but suppose that each $AxB, ByC, CzD \cdots$ all have the same size $s$ in the decoded binary string. Then we need to index $T[s] = \max_1(AxB, ByC, CzD \cdots)$. Again we do not have an ordered list, so to find $\max_1(AxB, ByC, CzD \cdots)$ we will need at least $\Omega(\rho-1)$. This argument is analogous for runs of $1$s three by three, four by four, and so on. Therefore, we need to index at least from the start of each run of $1$ to the end of each one. The time complexity for this is $\Omega(n+\sum_{i=1}^{\rho}i) = \Omega(n+\rho^2)$.~\qed
    \end{proof}

    \section{Practical results and discussions} \label{results}

        Now, we present practical results by comparing our suffix tree algorithm\footnote{Implementations available at: \\ \url{https://github.com/mariozenmedina/jumbled-pattern-matching/blob/master/sft_vs_p2.py}} 
        with Cunha et al.'s algorithm. The process begins by selecting a size for the string to be indexed. Subsequently, several random binary strings of this size are generated, and then each algorithm is applied. The time taken to construct $T_{max_1}$ and $T_{max_0}$ tables, as described in \autoref{tab:comparacao}, is then displayed.

        \setlength{\arrayrulewidth}{0.3mm}
        \setlength{\tabcolsep}{2.5pt}

    \begin{table}[!h]        
    
\begin{multicols}{2}
        \begin{center}
            \begin{tabular}{ | c | c | c | c | }
                \hline
                    \multicolumn{4}{|c|}{$1,000$ strings with length $1,000$} \\
                \hline
                     Algorithm & Min & Max & Avg \\ \hline
                     JBM2017 & 0.0470s & 0.1840s & 0.0642s \\
                     SfTree & 0.0186s & 0.0905s & 0.0244s  \\
                \hline
            \end{tabular}
        \end{center}


        \begin{center}
            \begin{tabular}{ | c | c | c | c | }
                \hline
                    \multicolumn{4}{|c|}{$1,000$ strings with length $5,000$} \\
                \hline
                     Algorithm & Min & Max & Avg \\ \hline
                     JBM2017 & 1.3350s & 5.6991s & 1.6792s  \\
                     SfTree & 0.2023s & 0.9228s & 0.2750s \\
                \hline
            \end{tabular}
        \end{center}

\columnbreak

        \begin{center}
            \begin{tabular}{ | c | c | c | c |  }
                \hline
                    \multicolumn{4}{|c|}{$1,000$ strings with length $10,000$} \\
                \hline
                     Algorithm & Min & Max & Avg \\ \hline
                     JBM2017 & 5.1280s & 10.3999s & 5.6467s  \\
                     SfTree & 0.6754s & 1.1174s & 0.7536s  \\
                \hline
            \end{tabular}
        \end{center}

\end{multicols}

\caption{A time comparison for indexing 1,000 random binary strings with different lengths is presented between Cunha et al.'s algorithm~\cite{cunha2017faster} (JBM2017) and our proposed suffix tree-based algorithm (SfTree), including the minimum, maximum, and average processing times. \label{tab:comparacao}}

    \end{table}

\begin{table}[!h]       
\begin{multicols}{2}
            \begin{center}
                \begin{tabular}{ | c | c | c | }
                    \hline
                        \multicolumn{3}{|c|}{Interspersed string $(010101\cdots)$} \\
                    \hline
                         Algorithm & Length: 10,000 & Length: 100,000 \\ \hline
                         JBM2017 & 23.1981s & 2213.4648s \\
                         SfTree & 0.0415s & 0.3587s  \\
                    \hline
                \end{tabular}
            \end{center}
            \caption{Time comparison for indexing between Cunha et al.'s algorithm~\cite{cunha2017faster} (JBM2017) and our proposed algorithm by using suffix trees (SfTree).\label{tab:comparacao_best}}

\columnbreak
            \begin{center}
                \begin{tabular}{ | c | c | c | }
                    \hline
                        \multicolumn{3}{|c|}{Fibonacci word $(0100101001001 \cdots)$} \\
                    \hline
                         Algorithm & Length: 5,000 & Length: 50,000 \\ \hline
                         JBM2017 & 3.3024s & 378.1867s \\
                         SfTree & 0.2491s & 17.6050s  \\
                    \hline
                \end{tabular}
            \end{center}
            \caption{Time comparison for Fibonacci binary word between Cunha et al.'s algorithm~\cite{cunha2017faster} (JBM2017) and our proposed algorithm by using suffix trees (SfTree).\label{tab:comparacao_fibo}}
\end{multicols}

        \end{table}

        \begin{figure}[!h]
            \centering
            \includegraphics[width=10.5cm]{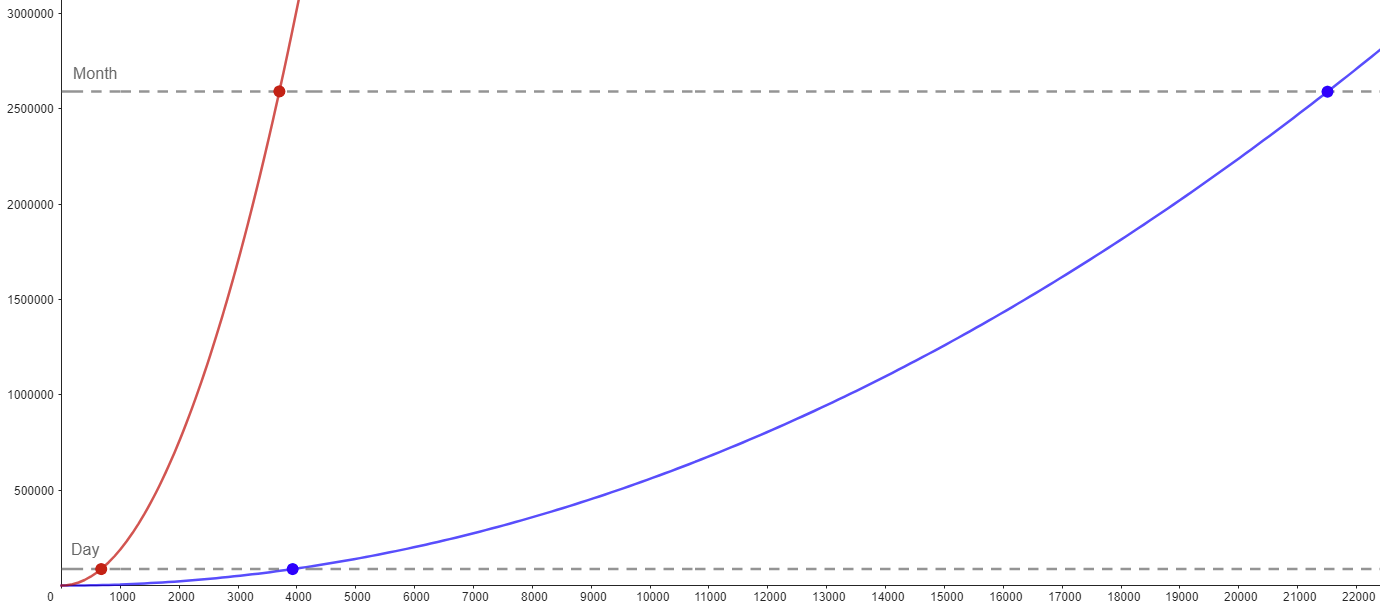}
            \caption{ 
                Execution time comparison between JBM2017 and SFTree algorithms. The y-axis represents execution time in seconds, while the x-axis represents the input size in thousands of digits. The graph illustrates the simulated asymptotic growth curves for both algorithms.
                \label{fig:asymptotic-simulation}
            }
        \end{figure}

        In \autoref{fig:asymptotic-simulation} we show that both algorithms exhibit quadratic growth, though with differing slopes of increase. The use of a suffix tree, provides a significant advantage in terms of practical performance. By leveraging the tree structure, the algorithm minimizes redundant operations and efficiently organizes substrings for indexing. This approach facilitates rapid access to substrings and their positions.

        Comparative tables and figures (\autoref{tab:comparacao}, \autoref{tab:comparacao_best}, \autoref{tab:comparacao_fibo}, \autoref{fig:asymptotic-simulation} and \autoref{fig:asymptotic-comparison}) highlight a substantial performance gap, with the suffix tree-based algorithm being markedly faster.
        
        While both algorithms have a quadratic time complexity in the average case, the vectorization applied in the suffix tree implementation offers a significant advantage. Vectorization refers to the process of optimizing the algorithm to perform operations on multiple data elements simultaneously, rather than iterating over them one by one. This allows for more efficient use of CPU resources, especially when handling large datasets.
        
        In the case of the suffix tree implementation, vectorization optimizes memory access to occur in linear time, where $n$ represents the number of nodes in the tree, and $2n$ is the maximum number of nodes. During traversal of the suffix tree, for each new node, its corresponding edge substring is indexed all at once rather than iterating over each individual digit.

        By processing multiple elements simultaneously, vectorization minimizes the impact of memory latency, a common bottleneck in algorithm performance. As a result, the suffix tree-based algorithm outperforms its counterparts, making it more efficient in practical applications.

        \begin{figure}[!h]
            \centering
            \includegraphics[width=9.5cm]{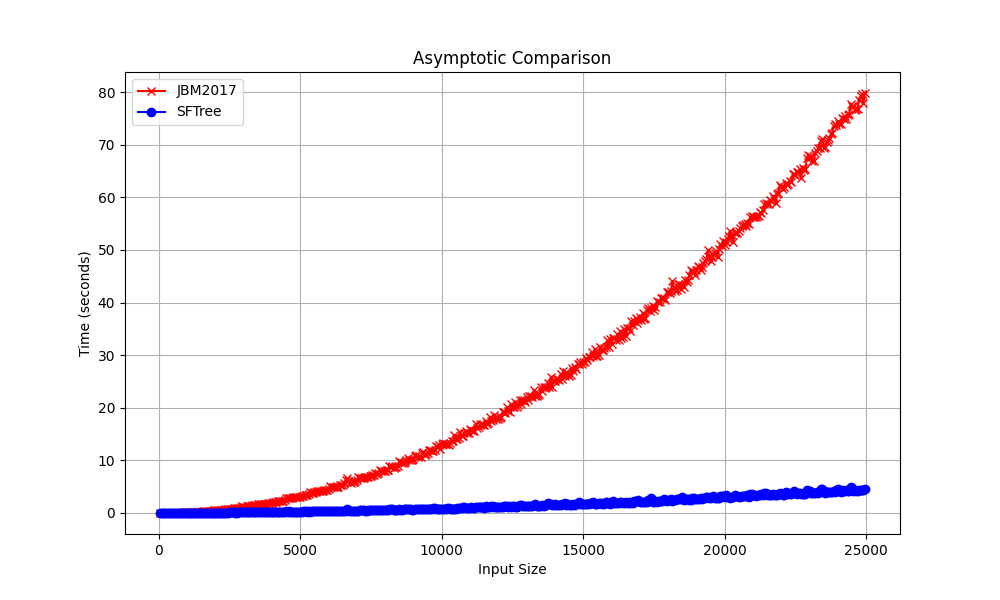}
            \caption{ 
                Asymptotic comparison between JBM2017 and SFTree in range(50, 25000, 50)
                \label{fig:asymptotic-comparison}
            }
        \end{figure}
        
        We highlight the advantages of our suffix tree algorithm:

        \begin{itemize}
            \item The suffix tree is a well-known data structure and can be used for various other applications within the same string, such as search, data compression, exact string matching, and others.
            \item It has the capability to construct a generalized tree for multiple strings, which saves time by building multiple index tables and reusing repetitions between those strings. 
            \item The algorithm is compatible with any traversal order and can be integrated into other traversal applications with minimal additional cost.
            \item There are no instances where it is slower than the other algorithm, but it can be much faster when the string contains sufficient repetitions, fluctuating between quadratic and linear time.
        \end{itemize}

        The execution time for Cunha et al.'s algorithm is particularly relevant in its worst-case scenario, where it remains quadratic, while the suffix tree allows us to achieve linear time, as shown in \autoref{tab:comparacao_best}. Another significant scenario is the Fibonacci word, which represents our worst-case scenario. As explained in the time complexity analysis in \autoref{algorithm}, despite being counter-intuitive, the time difference is more pronounced in our worst-case scenario due to repetitions, highlighting our advantages.

        The cost of processing ($p$) refers to the time required for CPU operations, while the cost of memory access ($m$) represents the latency and transfer time for retrieving data from memory. Memory access is significantly more expensive than processing, often by a factor of 10 to 100, depending on hardware architecture and caching mechanisms. Comparing the complexities of the algorithms, JBM2017 operates in $O(n^2 \cdot p + n^2 \cdot m)$, SFTree in $O(n^2 \cdot p + n \cdot m)$, and the 3SUM~\cite{chan2015clustered} in $O(n^{1.864} \cdot p + n^{1.864} \cdot m)$. While 3SUM~\cite{chan2015clustered} has a lower asymptotic growth due to its $O(n^{1.864})$ complexity, the SFTree algorithm takes advantage of its linear memory access term $O(n \cdot m)$, which can lead to better performance for most practical inputs. This advantage becomes especially pronounced in cases where the cost of memory access ($m$) dominates the cost of processing ($p$), as is typical in real-world systems.

        Considering the relationship between processing cost ($p$) and memory access cost ($m$), we observe significant differences in the crossover point where 3SUM becomes more efficient than SFTree. If $m = 10p$, 3SUM outperforms SFTree for input sizes larger than $4.5 \times 10^7$. However, if $m = 100p$, the crossover occurs only for input sizes exceeding $5.5 \times 10^{14}$. These results highlight the impact of memory access cost on the practical performance of the algorithms.

\bibliographystyle{abbrv}
\bibliography{cocoon}

\begin{thebibliography}{10}

\bibitem{afshani2020lower}
P.~Afshani, I.~van Duijn, R.~Killmann, and J.~S. Nielsen.
\newblock A lower bound for jumbled indexing.
\newblock In {\em Proceedings of the Fourteenth Annual ACM-SIAM Symposium on Discrete Algorithms (SODA)}, pages 592--606. SIAM, 2020.

\bibitem{badkobeh2013binary}
G.~Badkobeh, G.~Fici, S.~Kroon, and Z.~Lipt{\'a}k.
\newblock Binary jumbled string matching for highly run-length compressible texts.
\newblock {\em Inf. Process. Lett.}, 113(17):604--608, 2013.

\bibitem{benson2003composition}
G.~Benson.
\newblock Composition alignment.
\newblock In {\em Workshop on Algorithms in Bioinformatics (WABI)}, pages 447--461. Springer, 2003.

\bibitem{burcsi2012approximate}
P.~Burcsi, F.~Cicalese, G.~Fici, and Z.~Lipt{\'a}k.
\newblock On approximate jumbled pattern matching in strings.
\newblock {\em Theory of Computing Systems}, 50:35--51, 2012.

\bibitem{burcsi2017prefix}
P.~Burcsi, G.~Fici, Z.~Lipt{\'a}k, F.~Ruskey, and J.~Sawada.
\newblock On prefix normal words and prefix normal forms.
\newblock {\em Theor. Comput. Sci.}, 659:1--13, 2017.

\bibitem{chan2015clustered}
T.~M. Chan and M.~Lewenstein.
\newblock Clustered integer {3SUM} via additive combinatorics.
\newblock In {\em ACM symposium on Theory of computing (STOC)}, pages 31--40, 2015.

\bibitem{cicalese2009searching}
F.~Cicalese, G.~Fici, Z.~Lipt{\'a}k, et~al.
\newblock Searching for jumbled patterns in strings.
\newblock In {\em Stringology}, pages 105--117, 2009.

\bibitem{cicalese2013indexes}
F.~Cicalese, T.~Gagie, E.~Giaquinta, E.~S. Laber, Z.~Lipt{\'a}k, R.~Rizzi, and A.~I. Tomescu.
\newblock Indexes for jumbled pattern matching in strings, trees and graphs.
\newblock In {\em String Processing and Information Retrieval (SPIRE)}, pages 56--63, 2013.

\bibitem{cunha2017faster}
L.~Cunha, S.~Dantas, T.~Gagie, R.~Wittler, L.~Kowada, and J.~Stoye.
\newblock Faster jumbled indexing for binary {RLE} strings.
\newblock In {\em Combinatorial Pattern Matching (CPM)}, 2017.

\bibitem{COCOON2024}
L.~Cunha and M.~Medina.
\newblock Binary jumbled pattern matching: Suffix tree indexing.
\newblock In {\em Proceedings of the 30th International Computing and Combinatorics Conference (COCOON 2024)}, pages 1--12, 2024.

\bibitem{eres2004permutation}
R.~Eres, G.~M. Landau, and L.~Parida.
\newblock Permutation pattern discovery in biosequences.
\newblock {\em Journal of Computational Biology}, 11(6):1050--1060, 2004.

\bibitem{giaquinta2013new}
E.~Giaquinta and S.~Grabowski.
\newblock New algorithms for binary jumbled pattern matching.
\newblock {\em Inf. Process. Lett.}, 113(14-16):538--542, 2013.

\bibitem{hermelin2014binary}
D.~Hermelin, G.~M. Landau, Y.~Rabinovich, and O.~Weimann.
\newblock Binary jumbled pattern matching via all-pairs shortest paths.
\newblock {\em arXiv:1401.2065}, 2014.

\bibitem{kociumaka2017efficient}
T.~Kociumaka, J.~Radoszewski, and W.~Rytter.
\newblock Efficient indexes for jumbled pattern matching with constant-sized alphabet.
\newblock {\em Algorithmica}, 77:1194--1215, 2017.

\bibitem{lacroix2006motif}
V.~Lacroix, C.~G. Fernandes, and M.-F. Sagot.
\newblock Motif search in graphs: application to metabolic networks.
\newblock {\em IEEE/ACM Trans. Comput. Biol. Bioinform.}, 3(4):360--368, 2006.

\bibitem{moosa2010indexing}
T.~M. Moosa and M.~S. Rahman.
\newblock Indexing permutations for binary strings.
\newblock {\em Inf. Process. Lett.}, 110(18-19):795--798, 2010.

\bibitem{moosa2012sub}
T.~M. Moosa and M.~S. Rahman.
\newblock Sub-quadratic time and linear space data structures for permutation matching in binary strings.
\newblock {\em J. Discrete Algorithms}, 10:5--9, 2012.

\bibitem{song2021fast}
S.~Song, G.~Gu, C.~Ryu, S.~Faro, T.~Lecroq, and K.~Park.
\newblock Fast algorithms for single and multiple pattern cartesian tree matching.
\newblock {\em Theor. Comput. Sci.}, 849:47--63, 2021.

\bibitem{ukkonen1995line}
E.~Ukkonen.
\newblock On-line construction of suffix trees.
\newblock {\em Algorithmica}, 14(3):249--260, 1995.

\end{thebibliography}

\newpage

\section*{Declarations}

\paragraph{Funding}

Financial support from FAPERJ E-26/201.372/2022, and CNPq 406173/2021-4.

\paragraph{Conflict of interest}
Not applicable.

\paragraph{Ethics Approval}
Not applicable.

\paragraph{Consent to participate}
Not applicable.

\paragraph{Consent for publication}
Not applicable.

\paragraph{Data availability}

Only data generated by the presented algorithms were used. They can be similarly replicated by new executions of the code, but no particular dataset is provided.

\paragraph{Code availability}

All the algorithms are explained and the Python code used to generate simulations and comparisons is available in: \url{https://github.com/mariozenmedina/jumbled-pattern-matching/blob/master/suffixtree_vs_p2.py}

\paragraph{Author contributions}
The authors contributed equally to this work.






%
%



\end{document}